\documentclass{ifacconf}
\usepackage{amsmath,amssymb,bm,bbm,mathrsfs,amscd,tikz,enumerate}
\usepackage{calc}
\usepackage{graphicx}     
\usepackage{natbib}       
\usepackage{dsfont}
\newcommand{\vect}[1]{\boldsymbol{#1}}
\newcommand{\mat}[1]{\boldsymbol{#1}}
\usepackage{subfig,pgfplots}
\pgfplotsset{
  compat=newest, 
  legend style =
  {font=\footnotesize },
  label style = {font=\footnotesize},
every tick label/.append style={font=\footnotesize}}
\newtheorem{remark}{Remark}
\newtheorem{assumption}{Assumption}
\newtheorem{problem}{Problem}
\newtheorem{example}{Example}
\def \h {1.6}
\def \l {1.8}
\def \hh {2.2}
\def \ll {2.5}
\def \hhh {1.7}
\def \lll {2.1}
\allowdisplaybreaks
\usepackage[font={footnotesize}]{caption}

\begin{document}
\begin{frontmatter}

\title{Assessing performance tradeoffs in hierarchical organizations using a diffusive coupling model\thanksref{footnoteinfo}}  
\thanks[footnoteinfo]{M. Ye is supported by the Australian Government through the Australian Research Council (DE250100199) and the Office of National Intelligence (NI240100203). }

\author[first]{Lorenzo Zino}	
\author[second]{Mengbin Ye} 
\author[third]{Brian D. O. Anderson}
\address[first]{Department of Electronics and Telecommunications, Politecnico di Torino,  10129 Torino, Italy (e-mail: lorenzo.zino@polito.it).}
\address[second]{Adelaide Data Science Centre, Adelaide University, Australia (e-mail: ben.ye@adelaide.edu.au). }
\address[third]{School of Engineering, Australian National University, Acton, ACT 2601, Australia.  (e-mail: brian.anderson@anu.edu.au).}

\begin{abstract}
We study a continuous-time dynamical system of nodes diffusively coupled over a hierarchical network to examine the efficiency and performance tradeoffs that organizations, teams, and command and control units face while achieving coordination and sharing information across layers. Specifically, after defining a network structure that captures real-world features of hierarchical organizations, we use linear systems theory and perturbation theory to characterize the rate of convergence to a consensus state, and how effectively information can propagate through the network, depending on the breadth of the organization and the strength of inter-layer communication. Interestingly, our analytical insights highlight a fundamental performance tradeoff. Namely, networks that favor fast coordination will have decreased ability to share information that is generated in the lower layers of the organization and is to be passed up the hierarchy. Numerical results validate and extend our theoretical results. \end{abstract}

\begin{keyword}
Complex dynamic systems; Consensus; Interconnected dynamical systems; Large-scale complex systems; Multi-agent systems; Social networks and opinion dynamics
\end{keyword}
\end{frontmatter}

\section{Introduction}

Many organizations across a range of sectors, including governments, large corporations, and military command and control, operate in a hierarchical network structure with multiple layers~\citep{diefenbach2011formal}. The nodes may represent people or units of people, and the layers represent different levels within the organization. Nodes may receive commands from and pass information up to the layer above, while issuing commands to and receiving information from, the layer below. Regardless of which sector one considers, key common requirements for such hierarchical organizations include ability to rapidly coordinate and reach a collective consensus state, and for information to be shared across the hierarchy~\citep{tannenbaum1962control}. 

An important question concerns the impact of the hierarchical structure (e.g. how many layers, and how many nodes per layer) and the rate at which a node passes information to the layer above or below on coordination and information spreading of the overall organization~\citep{mihm2010hierarchical,tannenbaum1962control}. Some literature has developed theories, while others have taken an empirical approach~\cite{olchi1978transmission}. Mathematical modeling offers yet another approach, in which a set of nodes is connected over a (potentially hierarchical) network structure. The nodes are connected via diffusive coupling to capture information exchange, and may have internal dynamics to reflect information processing and coordination~\citep{laguna2005dynamics,skardal2017diffusion,Fontan2018,tejedor2018diffusion,Wang2023,Zino2024,leonard2024fast}. A recent paper has considered nodes as Kuramoto oscillators to capture distributed decision-making in command and control environments; numerical simulations were used to study the model dynamics as the nonlinearity of the model limited its analytical tractability~\citep{kalloniatis2020modelling_CnC}. Another recent work has examined how the hierarchical structure could impact information spreading from the top of the hierarchy downwards, using both single and double integrator node dynamics~\citep{d2025hierarchical}.

In this paper, we further explore the performance of  organizations structured in a hierarchy, modeled as a set of nodes with single integrator dynamics connected via linear diffusive coupling. This model allows us to analytical study  how the hierarchical network structure and different strength of interactions between different layers of the organization impact its performance in terms of i) the ability to rapidly coordinate and ii) to allow information to flow across the layers. The particular structure we work with captures a hierarchical organization with multiple layers, each one formed by units, whose nodes ---representing individuals or group of individuals--- act either as a regular member of the unit or, for one node per unit,  as the unit leader. All regular members receive information from a leader of a unit in the layer below, while the leader shares and receives information from a regular member in the layer above. 

We focus on two main research questions concerning this hierarchical network model.  The first deals with the autonomous dynamics of the system. It is known that the network will converge to a consensus state~\citep{Bullo2024}.  By leveraging theoretical results on opinion dynamics models~\citep{Bullo2024}, we study the speed of convergence of the dynamics in terms of the spectral properties of the adjacency matrix of the network. For a 2-layer structure, we provide  an exact characterization of the convergence rate in terms of the model parameters, with these results being supported by numerical simulations for deeper hierarchical organizations. This provides insight into how the structure of an organization favors or hinders fast coordination, which is critical for the well-functioning of many real-world organizations.

Second, we study the problem of understanding how information flows from the bottom to the top of the network, in contrast to existing literature which typically focuses on top to bottom information flow or the autonomous dynamics. This is key for the well-functioning of a hierarchical organization, where ideally lower-layer units should be able to promptly report information to the top-layer units. For instance, in military organizations, low-layer units may gather some critical information on the enemy at the frontline, which should quickly reach the top layer. To study this scenario, we incorporate a constant nonzero input to one of the lower-layer nodes, and we study the speed at which the network converges to the input. Using again spectral properties of the adjacency matrix and a perturbation argument~\citep{Greenbaum2020}, we illustrate analytically the existence of a fundamental tradeoff, between how fast the hierarchical network can coordinate and how fast information can spread bottom-up. This suggests that a hierarchical organization aiming to be efficient in both tasks should carefully balance  information processing across the layers. 

In summary, the main contributions of this paper are threefold. First, we propose a hierarchical network model that captures salient features of real-world organizations, with a parsimonious formalization that is amenable to analytical treatment. Second, we establish analytical insights into how the hierarchical structure impacts the speed at which coordination emerges in the organization via consensus formation. Third, we identified a fundamental tradeoff between the ability for information to diffuse rapidly up the hierarchy and achieving fast coordination across the entire organization. 

The rest of the paper is organized as follows. In Section~\ref{sec:network}, we formalize and discuss the hierarchical network structure. In Section~\ref{sec:problem}, we formulate the research problem. In Section~\ref{sec:results}, we present our main results. Section~\ref{sec:conclusion} concludes the paper and outlines future research. 

\section{Hierarchical network model}\label{sec:network}

\subsubsection{Notation}
We gather here the notational conventions used throughout the paper. We denote as $\mathbb R$ and $\mathbb R^+$ the sets of real and nonnegative real numbers, respectively. Given a vector $\vect{x}\in\mathbb R^n$ (or a matrix $\mat{M}\in\mathbb R^{m\times n}$), we denote by $\vect{x}^\top$ and ($\mat M^\top$) its transpose, and by $\text{diag}(\vect{x})\in\mathbb R^{n\times n}$ the diagonal matrix with the components of $\vect{x}$ on the diagonal. Given $M\in\mathbb Z_+$, $\vect{1}_M$ is the $M$-dimensional vector of all $1$s, $\vect{0}_M$ is the $M$-dimensional vector of all $0$s, $I_M$ is the $M$-dimensional identity matrix, and $e_k^{(M)}$ is the $M$-dimensional vector of all $0$s except for a $1$ in the $k$th component.

\subsection{Hierarchical network structure}
In this paper, we consider a network $\mathcal{G} = (\mathcal V, \mathcal E)$, where the set $\mathcal{V} = \{1,2,\hdots, n\}$ represents $n$ nodes and $\mathcal{E}$ is the directed edge set, with an edge $(i,j) \in \mathcal E$ originating from node $i$ and terminating at node $j$. Nodes represent actors in the organization (individuals, or aggregate groups acting as one) and edges capture communication between two actors.


We consider a specific hierarchical network structure that aims to provide a general description of many organizations, acknowledging in reducing the number of parameters needed to describe the network that we overlook some heterogeneity that exists for organizations in different contexts and domains. 

The network is made of $L\geq 2$ \emph{layers}, labeled in decreasing hierarchical order: the uppermost layer is layer $1$, the lowest is layer $L$. Each layer of the network corresponds to a layer in the organization. The basic element of this hierarchical structure is called a \emph{unit}. The structure comprises a single unit in the uppermost layer, and multiple units in the other layers, as described below. We assume that all units are of the same size, containing $M+1$ nodes, with $M\geq 2$. Within a unit, the $M+1$ nodes form a fully connected clique, i.e., each and every node interacts with every other node. In each unit there is a `unit leader' and $M$ `members'. The unit leader with the exception of the leader of the unit in the uppermost layer is responsible for communicating with precisely one member of the next upper layer of the hierarchy, while each of the other $M$ members except those in a unit in layer $L$ is responsible for communicating with the leader of precisely one unit at the next lower layer of the hierarchy. Hence, corresponding to each unit at level $\ell<L$, there are $M$ (sub-)units at level $\ell+1$, as illustrated in Fig.~\ref{fig:net2}.

\begin{figure}
    \centering
    \includegraphics[width=\linewidth]{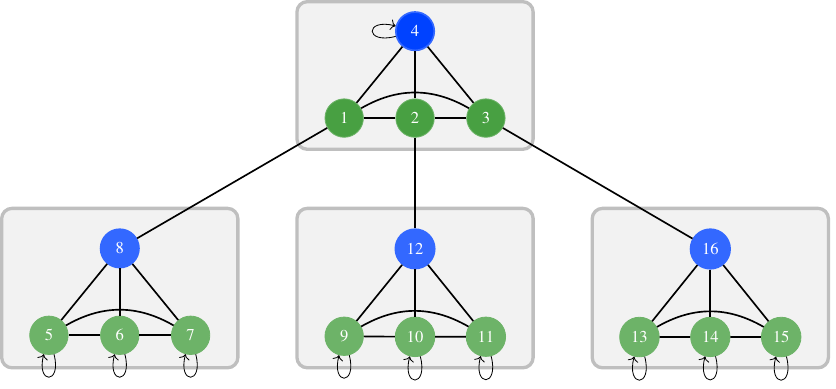}
    \caption{Representation of the hierarchical network with $L=2$ and $M=3$. Units are represented by gray boxes. Leaders and members of the units are represented in shades of blue and green, respectively.}
    \label{fig:net2}
\end{figure}

Given $L$ and $M$, the hierarchical network has $N=\frac{(M+1)(M^L-1)}{M-1}$ nodes. For convention, we assign indices to nodes unit-by-unit, starting from the top layer, labeling members before the leaders, so that nodes $1,\dots,M$  are the member of the uppermost unit and $M+1$ is its leader;  nodes $M+2,\dots,2M+1$  are the member of the first unit of the second layer and $2M+2$ is its leader; etc. 

\subsection{Weighted adjacency matrix}

In this paper, we consider information flowing over the  hierarchical network structure described in the above. To capture this, we introduce a weighted adjacency matrix $\mat{W}\in[0,1]^{n\times n}$. In particular, we assume that the weight given to information depends only on the relative levels of the layers in which the nodes exchanging information are located in the hierarchical structure. Specifically, we define two parameters $\alpha> 0$ and $\beta> 0$, which capture the value of information coming to any node of a given unit from the layer above and layer below, respectively, relative to the information coming  from the other members of the same unit. For the sake of simplicity we normalize weights to obtain a stochastic weight matrix and we also assume that the leader of the first unit and the members of the last-layer units are connected with a self-loop with a weight that guarantees stochasticity (i.e., such that $\mat{W}\vect{1}=\vect{1}$). We summarize these assumptions in the following. 

\begin{assumption}\label{a:network}
Let us write the index of a node as $i=i_3+(i_2-1)(M+1)+(M+1)\frac{M^{(i_1-1)}-1}{M-1}$, with $i_1$, $i_2$, and $i_3$ being the layer, the unit, and the position in the unit, respectively. Given two positive constants $\alpha>0$ and $\beta>0$, the weight matrix $\mat W$ is defined as
\begin{equation}
    w_{ij}=\left\{\begin{array}{cl}
    \tfrac{1}{M+\beta}&\text{if }i_1=j_1, i_2=j_2, i_3\neq M+1;\\
        \tfrac{1}{M+\alpha}&\text{if }i_1=j_1\neq 1, i_2=j_2, i_3= M+1;\\
    \tfrac{\beta}{M+\beta}&\text{if }j_1=i_1+1, j_2=M(i_2-1)+i_3,\\&\quad j_3=M+1;\\
    &\text{or }i_1=j_1=L, i_2=j_2, i_3=j_3\neq M+1;\\
    \tfrac{\alpha}{M+\alpha}&\text{if }i_1=j_1+1, i_2=M(j_2-1)+j_3,\\&\quad i_3=M+1; \\  
    &\text{or }i_1=i_2=j_1=j_2=1, i_3=j_3=M\hspace{-.05cm}+\hspace{-.05cm}1;\\
    0&\text{otherwise}.\end{array}\right.
\end{equation}

\end{assumption}

\begin{remark}\label{rem:symmetry}
In general, the graph $\mathcal G$ is bidirectional, but the weight matrix $\mat W$ is  non-symmetric (i.e., $\mat W\neq\mat W^\top$), with the exception of the special case in which $\alpha=\beta$.
\end{remark}

The weighted network structure described in Assumption~\ref{a:network} allows us to capture some key properties of hierarchical organizations. While the parameters $L$ and $M$ represent the depth and breadth of the structure, parameters $\alpha$ and $\beta$ encapsulate the heterogeneity that is typically present in how information is processed by the members of a unit, depending on whether the source of such information comes from the layer above or below. These two parameters compactly capture the organizational policies in terms of information sharing across the different levels of the hierarchy and importance given to such information. In the rest of the paper, we will focus our analysis on understanding how such parameters impact the performance of the organization. Before moving to that analysis, we present in more detail an example with $L=2$, which will be used in the following.

\begin{example}[Case $L=2$]\label{ex:L2}
Of particular interest is the case $L=2$, which is the simplest scenario in which the network has a nontrivial layered structure. In this case, the network consists of one unit in the first layer with leader and $M$ other nodes, and $M$ units in the second layer, for a total of $(M+1)^2$ nodes, as illustrated in Fig.~\ref{fig:net2} for $M=3$.  
The matrix $\mat W$ under Assumption~\ref{a:network} can be written in a compact form as\begin{equation}\label{eq:W}
   \mat  W=\begin{bmatrix}\mat P+  \mat E_{\alpha,M+1}&\mat E_{\beta,1}&\mat E_{\beta,2}&\dots&\mat E_{\beta,M}\\
 \mat  E_{\alpha,1}^\top& \mat P+\mat Q&\vect{0}\vect{0}^\top&\dots&\vect{0}\vect{0}^\top\\
       \mat E_{\alpha,2}^\top& \vect{0}\vect{0}^\top&\mat P+\mat Q&\dots&\vect{0}\vect{0}^\top\\\vdots& \vdots&&\ddots&\vdots\\
         \mat E_{\alpha,M}^\top& \vect{0}\vect{0}^\top&\dots&\vect{0}\vect{0}^\top&\mat P+\mat Q
    \end{bmatrix},
\end{equation}
with 
\begin{equation}\begin{array}{rcl}
    \mat P&=&\text{diag}([\tfrac{1}{M+\beta},\dots,\tfrac{1}{M+\beta},\tfrac{1}{M+\alpha}])(\vect{1}\vect{1}^\top- I),\\
    \mat Q&=&\tfrac{\beta}{M+\beta}( I-e_{M+1}e_{M+1}^\top),\\
    \mat E_{x,i}&=&\tfrac{x}{M+x}e_ie_{M+1}^\top,\,\,\,\mbox{ for }x=\alpha,\beta.
\end{array}
\end{equation}
where the dimension of all vectors and matrices involved is equal to $M+1$ (i.e., we use $\vect{1}$ for $\vect{1}_{M+1}$, $\vect{0}$ for $\vect{0}_{M+1}$, $I$ for $I_{M+1}$, and $e_{i}$ for $e_i^{(M+1)}$). 
\end{example}

\section{Dynamics and Problem Formulation}\label{sec:problem}

\subsection{Dynamics}

In this paper, we are interested in understanding the impact of the hierarchical structure on the performance of an organization in terms of coordination and information flowing. To investigate this question reducing all possible external confounding, we consider the simplest diffusive coupling model, in which all nodes have no internal dynamics and act as single integrators of the relative pairwise differences and, possibly, of an external input. 

In particular, each node $i\in\mathcal V$ is associated with a scalar variable $x_i(t)\in\mathbb R$, representing the state of the node. This scalar variable evolves in continuous  time ($t\in\mathbb R)$ according to the following dynamics:
\begin{equation}\label{eq:dyn}
    \dot x_i=\sum_{j\in\mathcal V}w_{ij}(x_j-x_i)+\gamma_i(u_i(t)-x_i),
\end{equation}
where $u_i(t)$ is a (locally integrable) external input, and the constant non-negative parameter $\gamma_i\in\mathbb R_+$ captures the intensity of the input, whereby $\gamma_i=0$ means that no input is exerted at node $i$ and $\gamma_i>0$ means that an input is exerted at node $i$. We say that the system has a single input if the vector $\vect{\gamma}$ comprises all zeros, except for one strictly positive entry.

This single-integrator model, whose dynamics can be written compactly as $\vect{\dot x}=\mat A\vect{x}+\text{diag}(\vect{\gamma})u$, with $\mat A= -(I_n-\mat W+\text{diag}(\vect\gamma))$, give rise to the well-known consensus dynamics, which has been extensively used to study social influence and opinion formation in complex social systems, see e.g.~\cite{Proskurnikov2017}.  In fact, \eqref{eq:dyn} captures information exchange between nodes and the tendency to compromise in order to reach a consensus. Hence, the parameter $w_{ij}$ captures the rate at which node $i$ uses information from $j$. The larger this quantity, the faster node $i$ tends to adjust its variable $x_i(t)$ to approach $x_j(t)$. All model and network parameters are summarized in Table~\ref{tab:parameters}.

\begin{table}
\centering
\caption{Model variables and parameters.}\label{tab:parameters}
\begin{tabular}{r| l}
$n$& population of the network\\
$x_i(t)$& state of node $i$ at time $t$\\
$u_i(t)$& input value at node $i$\\
$\gamma_i$& input intensity at node $i$\\
$L$& number of layers (depth)\\
$M$& number of sub-units associated to each unit (breadth)\\
$\alpha$& value of information coming from the layer above\\
$\beta$& value of information coming from the layer below
\end{tabular}
\end{table} 

We report the following results concerning the asymptotic behavior of \eqref{eq:dyn}, which will be used in our analyses. The proofs are based on classical theory of linear systems specialized to dynamical flows~\citep{Bullo2024}, and are thus omitted due to space limitations.
\begin{prop}\label{prop:convergence}
     In the absence of any input ($\gamma_i=0$ for all $i\in\mathcal V$), \eqref{eq:dyn} converges to a consensus state, i.e., $\lim_{t\to\infty}\vect{x}(t)=\bar x\vect{1}_n$. In particular, $\bar x=\vect{\pi}^\top\vect{x}(0)$, where $\vect{\pi}$ is the (normalized) left eigenvalue associated with the dominant eigenvalue of $\mat{W}$. Moreover,
\begin{equation}\label{eq:rate}
\|\vect{x}(t)-{\bar x}\vect{1}\|\leq K\exp\{-rt\},    
\end{equation}
where $r$ is the modulus of the second largest eigenvalue of $\mat W$ in modulus.
\end{prop}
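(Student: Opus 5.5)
The plan is to rewrite the input-free system as $\dot{\vect x}=-(I_n-\mat W)\vect x$ and reduce everything to the spectral structure of the row-stochastic matrix $\mat W$ through the Perron--Frobenius theorem. The first step is to verify that $\mat W$ is nonnegative and row-stochastic. This is immediate from Assumption~\ref{a:network}, since each row's weights sum to $1$ by construction. We then check that $\mat W$ is \emph{primitive}, i.e.\ irreducible and aperiodic.

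\emph{Irreducibility.} The graph $\mathcal G$ is bidirectional (Remark~\ref{rem:symmetry}). Each unit is a clique. Each non-top unit is linked through its leader to a member of its parent unit, and these links form a tree of units rooted at the top unit. Hence any two nodes are joined by a path, and $\mathcal G$ is strongly connected.

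\emph{Aperiodicity.} Every unit has $M+1\ge 3$ nodes forming a clique, so there are cycles of length $2$ and $3$, whose lengths are coprime. The self-loop at the top leader already gives a cycle of length $1$.

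By Perron--Frobenius, $1$ is a simple eigenvalue of $\mat W$ with right eigenvector $\vect 1_n$ and a positive left eigenvector $\vect\pi$ normalized so that $\vect\pi^\top\vect 1_n=1$. Every other eigenvalue $\lambda_k$ satisfies $|\lambda_k|<1$.

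Next we transfer this to $\mat A=-(I_n-\mat W)$, whose eigenvalues are $\lambda_k-1$. The eigenvalue $0$ is simple with eigenvectors $\vect 1_n$ and $\vect\pi$. Every other eigenvalue has real part $\mathrm{Re}\,\lambda_k-1\le|\lambda_k|-1<0$. Using the Jordan decomposition (or the spectral projector $\vect 1_n\vect\pi^\top$), write $e^{\mat A t}=\vect 1_n\vect\pi^\top+R(t)$, where $R(t)$ collects the modes associated with $\lambda_k\neq1$. Then $\vect x(t)\to\vect 1_n\vect\pi^\top\vect x(0)$, which identifies $\bar x=\vect\pi^\top\vect x(0)$. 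This value is also conserved, because $\vect\pi^\top\mat A=\vect 0^\top$ gives $\tfrac{d}{dt}\vect\pi^\top\vect x=0$.

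The step I expect to be delicate is the precise form of the rate in \eqref{eq:rate}. The decay of $R(t)\vect x(0)$ is governed by $\min_{k\ge2}(1-\mathrm{Re}\,\lambda_k)$. This is bounded below by $1-\max_{k\ge2}|\lambda_k|$, so the natural reading of $r$ in the statement is the spectral gap associated with the second largest eigenvalue in modulus, and that is the quantity I would prove. A further issue is that if $\mat W$ has nontrivial Jordan blocks, the bound picks up polynomial factors in $t$. I would handle this in one of two ways. The first is to prove \eqref{eq:rate} with any rate strictly smaller than the gap, absorbing the polynomial factors into $K$, which depends on $\vect x(0)$. The second is to argue diagonalizability, which I would try first. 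When $\alpha=\beta$, $\mat W$ is symmetric and the argument is direct. For general $\alpha,\beta$, the diagonal similarity $D^{1/2}\mat W D^{-1/2}$ with $D$ built from the entries of $\vect\pi$ would make $\mat W$ similar to a symmetric matrix and yield exactly the gap rate. This works if the chain is reversible, which is plausible here because $\mathcal G$ is a tree of cliques with symmetric intra-unit weights. If reversibility fails, I would fall back on the first option.
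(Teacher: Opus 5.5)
Your proposal is correct and is the classical argument the paper appeals to without writing out: Perron--Frobenius for the primitive row-stochastic $\mat W$, then the spectral projector $\vect 1_n\vect\pi^\top$ in $e^{\mat A t}$. You are also right that $r$ must be reinterpreted as a spectral gap, since read literally ($r=|\lambda_2|$) the statement is false. Your preferred diagonalization route goes through: every inter-unit edge is a bridge and intra-unit weights depend only on the sending node, so Kolmogorov's criterion gives reversibility (for $L=2$ you can check $\pi_i w_{ij}=\pi_j w_{ji}$ directly with $\vect\pi$ from Proposition~\ref{prop:pi}). This yields the exact rate $1-\max_{k\ge2}\lambda_k$ over real eigenvalues, which is what Theorem~\ref{th:problem1} actually uses, rather than your conservative $1-\max_{k\ge2}|\lambda_k|$.
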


\begin{prop}\label{prop:convergence2}
     In the presence of a single constant input (single entry $\gamma_i>0$ with $u_i(t)=u$), \eqref{eq:dyn} converges to the consensus state $\lim_{t\to\infty}\vect{x}(t)=u\vect{1}_n$, Further, with $\lambda_A$ defined as the largest eigenvalue of $\mat W-\text{diag}(\vect\gamma)$, then   \eqref{eq:rate}
holds where $r=1-\lambda_A$.
\end{prop}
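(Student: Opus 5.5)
The dynamics with a single constant input read $\dot{\vect x}=-(I_n-\mat W+\text{diag}(\vect\gamma))\vect x+\text{diag}(\vect\gamma)u\vect 1_n$, since $u_i=u$ and $\gamma_j=0$ for $j\neq i$. I will treat this as a linear time-invariant system with a constant forcing term. The plan has three steps: identify the equilibrium, show that the system matrix is Hurwitz, and read off the decay rate from its spectral abscissa.

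\textbf{Step 1: the equilibrium.} I will verify that $u\vect 1_n$ is an equilibrium. Since $\mat W\vect 1_n=\vect 1_n$ by Assumption~\ref{a:network}, we get $(I_n-\mat W)\vect 1_n=\vect 0_n$. Also $\text{diag}(\vect\gamma)u\vect 1_n-\text{diag}(\vect\gamma)u\vect 1_n=\vect 0_n$, so $u\vect 1_n$ is indeed a fixed point. Setting $\vect e=\vect x-u\vect 1_n$ then gives the homogeneous system $\dot{\vect e}=\mat A\vect e$, with $\mat A=\mat W-\text{diag}(\vect\gamma)-I_n$.

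\textbf{Step 2: $\mat A$ is Hurwitz.} The matrix $\mat B=\mat W-\text{diag}(\vect\gamma)$ is nonnegative off the diagonal. Adding $cI$ for large enough $c$ makes it nonnegative, so Perron--Frobenius theory applies. The graph $\mathcal G$ is strongly connected: every unit is a clique, and each unit is joined to its parent unit by a bidirectional edge. Hence $\mat B$ is irreducible, and its largest real eigenvalue $\lambda_A$ is simple, has a positive eigenvector $\vect v$, and has real part dominating every other eigenvalue.

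To show $\lambda_A<1$, I will argue that $\mat B$ is substochastic with row sums $1$ except at row $i$, where the row sum is $1-\gamma_i<1$. The cleanest route is to take the positive left Perron vector $\vect w$ of $\mat B$. Then $\lambda_A\,\vect w^\top\vect 1=\vect w^\top\mat B\vect 1=\vect w^\top\vect 1-\gamma_i w_i$, and since $w_i>0$ this gives $\lambda_A<1$ strictly. Irreducibility is used exactly here, because it is what guarantees $w_i>0$.

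\textbf{Step 3: the rate.} Every eigenvalue of $\mat A=\mat B-I$ has real part at most $\lambda_A-1<0$, so $\mat A$ is Hurwitz. Standard linear systems theory, via the Jordan form, then gives $\|\vect e(t)\|\le K\exp\{-(1-\lambda_A)t\}$. Because $\lambda_A$ is simple and dominant, no polynomial prefactor appears in the leading mode, so the rate $r=1-\lambda_A$ is attained rather than just approached. This yields both the convergence $\vect x(t)\to u\vect 1_n$ and \eqref{eq:rate}.

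The main obstacle is the strict inequality $\lambda_A<1$ in Step 2. It requires irreducibility, hence the strong-connectivity check of the hierarchical graph, together with the Perron--Frobenius argument above. The remaining subtlety is ensuring that the bound exponent equals exactly $1-\lambda_A$, which relies on the simplicity of $\lambda_A$.
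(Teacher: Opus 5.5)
Your proof is correct, and it is essentially the classical linear-systems and Perron--Frobenius argument for network flows that the paper invokes without writing out (it omits the proof and cites \citep{Bullo2024}): shift to $\vect e=\vect x-u\vect 1_n$, use irreducibility of the hierarchical graph to get a simple real dominant eigenvalue $\lambda_A$ of $\mat W-\text{diag}(\vect\gamma)$, and use the positive left Perron vector to obtain $\lambda_A=1-\gamma_i w_i/(\vect w^\top\vect 1_n)<1$. The one step worth stating explicitly is that every other eigenvalue $\mu$ satisfies $\mathrm{Re}\,\mu<\lambda_A$ strictly, since $|\mu+c|\le\lambda_A+c$ and equality of real parts would force $\mu=\lambda_A$; this strict gap is what makes the exponent exactly $1-\lambda_A$ rather than only $1-\lambda_A-\varepsilon$.
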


In the rest of this paper, we will use Propositions~\ref{prop:convergence} and~\ref{prop:convergence2} to unveil how the hierarchical network structure shapes the consensus state and the speed of convergence to it.

\subsection{Problem formulation}


In order to formalize our research questions, we start by presenting a motivating example, which illustrates how designing a hierarchical structure for an organization is a nontrivial problem due to a tradeoff that seem to inherently emerge between how fast a hierarchical structure can reach a common agreement (i.e., a consensus) and how fast information can flow across the network. This tradeoff is critical in real-world hierarchical organizations. In military organization, e.g., one would generally like all individual troops to be able to promptly align to the order received, but also for reports from the lower-level units (e.g., who may obtain critical information on the enemy) to quickly be received by the top layers of the organization. 

\begin{example}[Motivating example]\label{ex:example}
Simulations in Fig.~\ref{fig:example}  illustrate two different instances of \eqref{eq:dyn}  on a hierarchical network with $L=2$ and $M=3$ (see Fig.~\ref{fig:net2}). In Figs.~\ref{fig:example1}--\ref{fig:example3}, we consider the system with no input, with a random initial condition (equal in all simulations). In the simulations, we fix $\beta=1$ and change the value of $\alpha$. Increasing $\alpha$ seems always beneficial in speeding up the consensus formation process, shaping its consensus state. In Figs.~\ref{fig:example4}--\ref{fig:example6}, we introduce an input $u=1$ to node $i=5$ (a member of a second-layer unit), setting all initial conditions to $\vect{x}(0)=\vect{1}$. Interestingly, the speed of convergence to the input is non-monotone: as $\alpha$ increases, an optimum is reached, then the performance seems to get worse. Note that this effect is observed not only when considering convergence of the whole network, but also when considering the leading unit (solid trajectories, shades of blue).
\end{example}

\begin{figure}
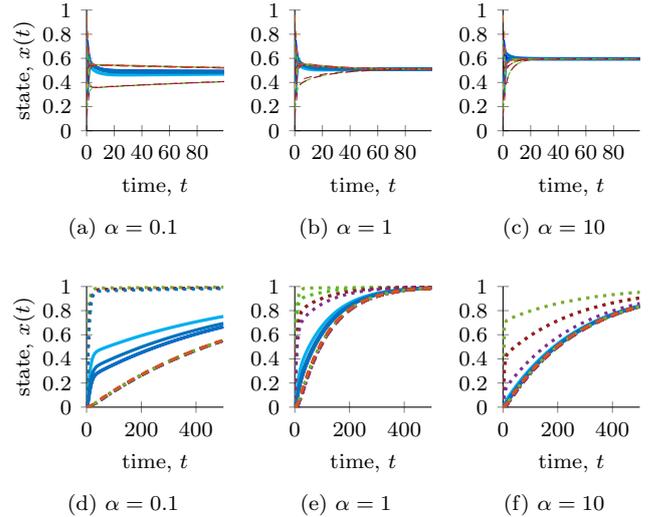

    \centering
\subfloat[$\alpha=0.1$]{\input{fig/a01}    \label{fig:example1}}
\subfloat[$\alpha=1$]{\input{fig/a1}    \label{fig:example2}}
\subfloat[$\alpha=10$]{\input{fig/a10}    \label{fig:example3}}\\
\subfloat[$\alpha=0.1$]{\input{fig/b01}    \label{fig:example4}}
\subfloat[$\alpha=1$]{\input{fig/b1}    \label{fig:example5}}
\subfloat[$\alpha=10$]{\input{fig/b10}    \label{fig:example6}}
\caption{Dynamics with (a--c) autonomous consensus and (d--f) bottom-up consensus with different values of $\alpha$. Solid curves correspond to the nodes in the unit on the first layer, dotted curves are nodes in the unit with the input (with the green curve the one of the node with the input), dashed curves are nodes in the other unit on the second layer. In (a--c), since no input is present, all trajectories on the second layer are dashed. Common parameters are $L=2$ and $M=3$. }
    \label{fig:example}
\end{figure}

Motivated by Example~\ref{ex:example}, we focus on two different aspects related to the efficiency of a hierarchical network structure, viz. \textit{autonomous consensus} and \textit{bottom-up consensus}. In the first problem, we are interested in understanding how fast coordination spontaneously emerges in the network. To study this scenario, we assume that the system has no input, and we study how the different value given to information coming from different layers and the network breadth shape the consensus state reached and the rate at which it emerges, i.e., the time needed to reach coordination across the organization. Then, we study how information flows from the bottom of the organization to the top. To study this scenario, we place an input to a member of the lowermost layer, and we study how fast the network converges to this input. We analytically study these two problems for hierarchical networks with $L=2$ layers,  illustrating numerically how our results are coherently extended (and often emphasized) for deeper networks. We formalize these two problems as follows.

\begin{problem}[Autonomous consensus]\label{p:1}
    Assume that $L=2$ and $\gamma_i=0$ for all $i\in\mathcal V$. Determine the consensus state of \eqref{eq:dyn} and its rate of convergence as functions of $M$, $\alpha$, and $\beta$.
\end{problem}

\begin{problem}[Bottom-up consensus]\label{p:2}
    Assume that $L=2$, $\gamma_{M+2}>0$, $u_{M+2}(t)=u$, and $\gamma_i=0$ for all $i\neq M+2$. Determine the rate of convergence of  \eqref{eq:dyn} to the input as a function of $M$, $\alpha$, and $\beta$.
\end{problem}

\section{Results}\label{sec:results}

\subsection{Problem I: Autonomous consensus}

In this section, we study Problem~\ref{p:1}  (autonomous consensus), where the system in \eqref{eq:dyn} has no input. Motivated by the observations from Example~\ref{ex:example}, we are interested in understanding how the hierarchical structure shapes the final consensus reached and the speed of convergence. 

In this regard, Proposition~\ref{prop:convergence} guarantees that the system converges to a consensus state, whose value is shaped by the the left eigenvector associated with the dominant eigenvalue of $\mat W$ and the speed of convergence is determined by the second largest eigenvalue. 
Hence, Problem~\ref{p:1} ultimately reduces to the study of the spectrum of $\mat W$. For $L=2$, despite the complexity due to the non-trivial structure and the non-symmetry of the matrix (see Remark~\ref{rem:symmetry}, the regularity in the structure allows us to explicitly compute the eigenvalues and eigenvectors, as summarized in the following propositions.

 
\begin{prop}\label{prop:pi}
The left eigenvector associated with the largest eigenvalue $\lambda_A=1$ of  $W$ in \eqref{eq:W} is
    \begin{equation}\label{eq:left}
        \vect{\pi}=K\begin{bmatrix}
            \vect{1}_M^\top&
            \tfrac{M+\alpha}{M+\beta}&
            \tfrac{\beta}{\alpha}\vect{1}_M^\top&
            \tfrac{(M+\alpha)\beta}{(M+\beta)\alpha}&
                \dots&    \tfrac{\beta}{\alpha}\vect{1}_M^\top&
            \tfrac{(M+\alpha)\beta}{(M+\beta)\alpha}&
        \end{bmatrix}^\top,
    \end{equation}
    with normalization factor $K=\frac{\alpha(M+\beta)}{(M^2+M+\alpha+M\beta)(M\beta+\alpha)}$.
\end{prop}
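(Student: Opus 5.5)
The plan is to verify directly that $\vect\pi^\top\mat W=\vect\pi^\top$ block by block, using the compact form \eqref{eq:W} of Example~\ref{ex:L2}. Uniqueness then comes from Perron--Frobenius, and the constant $K$ from summing the entries. First I would record that $\mat W$ is row-stochastic, so $\lambda_A=1$ is an eigenvalue with right eigenvector $\vect 1$. The graph $\mathcal G$ is strongly connected: the hierarchical structure is bidirectional and each unit is a clique. The self-loops (at the top leader and at the bottom-layer members) make $\mat W$ primitive. Hence $1$ is a simple dominant eigenvalue with a unique positive normalized left eigenvector. It therefore suffices to exhibit a positive vector satisfying $\vect\pi^\top\mat W=\vect\pi^\top$ and normalize it.

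The key step is checking column sums weighted by $\vect\pi$ for the four types of column. It is convenient to write $p_m=1$ for top members, $p_\ell=\frac{M+\alpha}{M+\beta}$ for the top leader, $q_m=\frac{\beta}{\alpha}$ for bottom members, and $q_\ell=\frac{(M+\alpha)\beta}{(M+\beta)\alpha}$ for bottom leaders.

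\begin{itemize}
\item \emph{Column of top member $j$.} The other $M-1$ top members contribute $\frac{M-1}{M+\beta}$. The top leader contributes $p_\ell\cdot\frac{1}{M+\alpha}=\frac{1}{M+\beta}$. The leader of sub-unit $j$ contributes $q_\ell\cdot\frac{\alpha}{M+\alpha}=\frac{\beta}{M+\beta}$. The total is $1=p_m$.
\item \emph{Column of the top leader.} The members contribute $\frac{M}{M+\beta}$ and the self-loop contributes $p_\ell\frac{\alpha}{M+\alpha}=\frac{\alpha}{M+\beta}$, giving $p_\ell$.
\item \emph{Column of a bottom member.} The other members contribute $(M-1)\frac{q_m}{M+\beta}$, the self-loop contributes $\frac{\beta q_m}{M+\beta}$, and the unit leader contributes $\frac{q_\ell}{M+\alpha}=\frac{\beta}{\alpha(M+\beta)}$. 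The total is $q_m$.
\item \emph{Column of the bottom leader of unit $j$.} The members contribute $\frac{Mq_m}{M+\beta}$ and top member $j$ contributes $\frac{\beta}{M+\beta}$, giving $q_\ell$.
\end{itemize}

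Each identity is a one-line algebraic simplification. The only real care needed is reading off the correct entries from Assumption~\ref{a:network}, in particular the special self-loop weights and which $\mat E_{x,i}$ block sits where.

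Finally, I would sum the entries of the unnormalized vector:
\[
M+\tfrac{M+\alpha}{M+\beta}+M\Big(\tfrac{M\beta}{\alpha}+\tfrac{(M+\alpha)\beta}{(M+\beta)\alpha}\Big)=\tfrac{M^2+M\beta+M+\alpha}{M+\beta}\cdot\tfrac{\alpha+M\beta}{\alpha}.
\]
Its reciprocal is exactly the stated $K$. The main (mild) obstacle is the bookkeeping of indices in the compact block form rather than any conceptual difficulty.
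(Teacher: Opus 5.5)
Your proposal is correct and takes the same approach as the paper, whose proof consists only of checking that $\vect{\pi}^\top\mat W=\vect{\pi}^\top$; your four column computations, read off the block form \eqref{eq:W}, all check out. The Perron--Frobenius argument for the simplicity of the eigenvalue $1$ and the summation that yields the normalization constant $K$ are also correct, and they fill in details the paper leaves implicit.
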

\begin{pf}
    The proof follows from checking that $\vect{\pi}^\top \mat W=\vect{\pi}^\top$.
\end{pf}

\begin{prop}\label{prop:spectrum_alphabeta}
The matrix $W$ defined in \eqref{eq:W} has all real eigenvalues: the dominant eigenvalue $\lambda_A=1$ has multiplicity $1$; eigenvalues
\begin{equation}\label{eq:lambdaB_general}
     \begin{array}{ll}\lambda_{B,C,D}&=\frac{\beta K_{B,C,D}(M+\beta)-\beta K_{B,C,D}+1}{(M+\beta)(\beta K_{B,C,D}-1)}\\
     &=
    1-\frac{\beta K_{B,C,D}-M-\beta-1}{(M+\beta)(\beta K_{B,C,D}-1)},
    \end{array}
 \end{equation}
 each with multiplicity $M-1$, where $K_{B}>K_{C}>K_{D}$ are the three real solutions of the third-order equation
 \begin{equation}\label{eq:K_general}
(\alpha+MK)(\beta K-1)^2-(M+\alpha)K(\beta K(M+\beta)-\beta K+1)=0;
 \end{equation}
eigenvalue $\lambda_E=\frac{\beta-1}{M+\beta}$ with multiplicity $M(M-1)$, and the three single eigenvalues $\lambda_F=\frac{M(\alpha-1)+\alpha (\beta-1)}{(M+\alpha)(M+\beta)}$ and
\begin{equation}\label{eq:gh}
    \lambda_{G,H}
=\frac{M-1\pm\sqrt{(M-1)^2+\frac{4(M+\beta)(M+\alpha\beta)}{M+\alpha}}}{2(M+\beta)}.
\end{equation}
\end{prop}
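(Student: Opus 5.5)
The proof rests on the symmetry of the $L=2$ network, which has one top unit and $M$ identical lower units. The plan is to split $\mathbb R^{(M+1)^2}$ into $\mat W$-invariant subspaces on which $\mat W$ acts as a small matrix, and then read off eigenvalues and multiplicities. Reality of the spectrum does not need to come out of these computations. Instead I would check directly that $\mat W$ is reversible with respect to the vector $\vect\pi$ of Proposition~\ref{prop:pi}, i.e., $\pi_i w_{ij}=\pi_j w_{ji}$ for every edge. The graph is a tree of cliques, so this reduces to two checks: within-clique pairs and the two kinds of inter-layer edges. Then $D^{1/2}\mat W D^{-1/2}$, with $D=\text{diag}(\vect\pi)$, is symmetric, so every eigenvalue is real and $\mat W$ is diagonalizable. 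Simplicity of $\lambda_A=1$ follows from irreducibility and the Perron--Frobenius theorem, or equivalently from Proposition~\ref{prop:convergence}.

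\emph{Step 1 (the $\lambda_E$ subspace).} Take vectors supported on the members of a single lower unit, with zero sum over those members and zero entries elsewhere. Within that unit, $\mat P\vect v=-\tfrac{1}{M+\beta}\vect v$ and $\mat Q\vect v=\tfrac{\beta}{M+\beta}\vect v$. The leader row of $\mat P$ sees only the member sum, which is zero. The coupling blocks $\mat E_{\beta,i}$ read only the leader coordinate, which is also zero. So these vectors are eigenvectors with eigenvalue $\frac{\beta-1}{M+\beta}$. There are $M-1$ of them per unit and $M$ units, giving multiplicity $M(M-1)$.

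\emph{Step 2 (the "antisymmetric" subspace, $\lambda_{B,C,D}$).} Fix $\vect c\in\mathbb R^M$ with $\vect 1^\top\vect c=0$. Consider vectors with:
\begin{itemize}
\item value $d\,c_i$ at top member $i$;
\item value $0$ at the top leader;
\item value $a\,c_i$ at every member of lower unit $i$;
\item value $b\,c_i$ at the leader of lower unit $i$.
\end{itemize}
Since $\sum_i c_i=0$, the top leader receives zero. Within the top clique, member $i$ receives $\frac{1}{M+\beta}\big(\sum_{j}d c_j - d c_i\big)=-\frac{d c_i}{M+\beta}$, plus the term $\frac{\beta}{M+\beta}b c_i$ from its sub-unit leader. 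So this $3(M-1)$-dimensional space is invariant, and $\mat W$ acts on $(d,a,b)$ through a $3\times3$ matrix, once for each of the $M-1$ independent choices of $\vect c$. I would write $\lambda\,(d,a,b)=\mat R(d,a,b)$ and eliminate the variables. Solving the member equation gives the ratio $b/a$, and the intended parametrization should be $K$ proportional to this ratio, e.g.\ $a=\beta K b$ or similar. Substituting then turns the eigenvalue equation into $\lambda=\frac{\beta K(M+\beta)-\beta K+1}{(M+\beta)(\beta K-1)}$, and the remaining two equations combine into the cubic~\eqref{eq:K_general}. Because $K\mapsto\lambda$ is a Möbius map, distinct real roots $K$ give distinct real $\lambda$. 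Reality of the roots is already guaranteed by reversibility.

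\emph{Step 3 (the symmetric subspace).} The remaining four dimensions are spanned by vectors that are constant on each of four classes: top members ($d$), top leader ($f$), lower members ($a$), lower leaders ($b$). On this space $\mat W$ is a $4\times4$ stochastic matrix. It has eigenvalue $1$ with eigenvector $\vect 1$. Dividing the characteristic polynomial by $(1-\lambda)$ leaves a cubic. I expect this cubic to factor as a linear factor with root $\lambda_F$ times the quadratic $(M+\beta)\lambda^2-(M-1)\lambda-\frac{M+\alpha\beta}{M+\alpha}=0$, whose roots are~\eqref{eq:gh}. To find the linear factor, I would first try the ansatz $f=0$ or $d=a$, since the expression for $\lambda_F$ mixes $\alpha$ and $\beta$ as a product of unit-level quantities. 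The final check is the dimension count $M(M-1)+3(M-1)+4=(M+1)^2$, which confirms that the subspaces exhaust the space.

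The main obstacle is Step 2: choosing the parametrization $K$ that reproduces exactly~\eqref{eq:K_general} and~\eqref{eq:lambdaB_general}. A second difficulty is showing that the three roots are distinct, which is needed for the strict ordering $K_B>K_C>K_D$. For that I would use a sign-change argument on the cubic: evaluate it at $K=0$, at $K=1/\beta$ (where the squared factor vanishes), and as $K\to\pm\infty$. This should locate one root in each of three disjoint intervals.
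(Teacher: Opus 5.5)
Your argument is correct, and its skeleton is the paper's. The appendix exhibits eigenvector families that span exactly your three subspaces: $\vect{\check v}$ for the $\lambda_E$ space, $\vect{\hat v}$ for your $3(M-1)$-dimensional antisymmetric space, and $\vect 1,\vect{\bar v},\vect{\tilde v}$ for your $4$-dimensional symmetric space. It also locates the three roots of \eqref{eq:K_general} by the same sign changes at $K=0$, $K=1/\beta$ and $K\to\pm\infty$.

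The genuinely different ingredient is reversibility. The paper gets real eigenvalues only from the explicit formulas (sign changes of the cubic, positive discriminant of the quadratic behind \eqref{eq:gh}). It gets multiplicities by counting eigenvectors, which tacitly requires the families to be independent. Your identity $\pi_iw_{ij}=\pi_jw_{ji}$ does hold for every pair, and it incidentally reproves Proposition~\ref{prop:pi}. So reality and diagonalizability come for free, for any $L$.

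Moreover, your three subspaces are invariant and mutually orthogonal. Hence the characteristic polynomial of $\mat W$ is $(\lambda-\lambda_E)^{M(M-1)}\chi_3(\lambda)^{M-1}\chi_4(\lambda)$, with $\chi_3,\chi_4$ those of your $3\times3$ and $4\times4$ blocks. Multiplicities therefore add up automatically, even when eigenvalues of different blocks coincide. For instance, $\alpha=\beta$ gives $\lambda_D=\lambda_E=\lambda_F=\frac{\alpha-1}{M+\alpha}$, so the multiplicities in the statement must be read additively. Reversibility does not replace the sign-change argument, which you still need for distinctness and for the ordering $K_B>K_C>K_D$. What the paper's route buys in exchange is explicit eigenvectors.

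Two of your heuristics would not work as written.

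\emph{Step~2.} Take $K=a/d$ and $H=b/d$. Every eigenvector has $d\neq0$, since $d=0$ forces $b=0$ and then $a=0$. A multiple of $b/a$ cannot reproduce \eqref{eq:lambdaB_general}, because the lower-member equation makes $\lambda$ affine in $b/a$. With $K=a/d$:
\begin{itemize}
\item the top-member and lower-member equations give $(M+\beta)\lambda=\beta H-1$ and $H=\frac{(M+\beta)K}{\beta K-1}$, which combine into \eqref{eq:lambdaB_general};
\item the lower-leader equation $(M+\alpha)\lambda H=MK+\alpha$ becomes exactly \eqref{eq:K_general} after clearing $(\beta K-1)^2$.
\end{itemize}

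\emph{Step~3.} The ansatz that isolates $\lambda_F$ is $d=b$, not $f=0$ or $d=a$. More directly, set $t=(M+\alpha)\lambda$ and $s=(M+\beta)\lambda-(M+\beta-1)$. Then
$(M+\alpha)^2(M+\beta)^2\chi_4(\lambda)=[s(t-\alpha)-M]\,[t(s+\beta)-M-\alpha\beta]$.
The first factor has roots $1$ and $\lambda_F$, and the second factor is the quadratic yielding \eqref{eq:gh}.
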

\begin{pf}
The proof is provided in Appendix~\ref{app:proof1}.
\end{pf}

It is worth noticing that \eqref{eq:K_general} is a third-order equation, so we can derive a closed-form expression for all the eigenvalues of $W$. Due to its cumbersomeness, we omit reporting of such an expression. 

Finally, building on Propositions~\ref{prop:convergence},~\ref{prop:pi}, and~\ref{prop:spectrum_alphabeta}, we can establish the following result, which solves Problem~\ref{p:1}, characterizing the consensus state and the speed of convergence for the autonomous consensus problem on a hierarchical network with $L=2$.


\begin{thm}\label{th:problem1}
The state $\vect{x}(t)$ converges to a consensus $\vect{\bar x}=\vect{\pi}^\top \vect{x}(0)\vect{1}$ with $\vect{\pi}$ from \eqref{eq:left}, and it holds
\begin{equation}
\|\vect{x}(t)-\vect{\bar x}\|\leq K\exp\{-rt\},    
\end{equation}
with 
\begin{equation}\label{eq:r}
     r=1-\max\{\lambda_B,\lambda_G\}
 \end{equation}
from \eqref{eq:lambdaB_general} and \eqref{eq:K_general}. 
 \end{thm}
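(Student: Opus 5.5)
The plan is to specialize Proposition~\ref{prop:convergence} to the $L=2$ case, inserting the explicit left eigenvector from Proposition~\ref{prop:pi} and the full spectrum from Proposition~\ref{prop:spectrum_alphabeta}. The consensus part is then immediate. Since $\mat W$ is row-stochastic and the graph $\mathcal G$ is strongly connected (every unit is a clique and consecutive layers are linked through leader--member edges), the eigenvalue $\lambda_A=1$ is simple by Proposition~\ref{prop:spectrum_alphabeta}. Proposition~\ref{prop:convergence} then gives $\vect x(t)\to(\vect\pi^\top\vect x(0))\vect 1$ with $\vect\pi$ as in \eqref{eq:left}, and we write $\vect{\bar x}=(\vect\pi^\top \vect x(0))\vect 1$.

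For the rate, I would work with $\mat A=-(I-\mat W)$, whose eigenvalues are $\lambda-1$ for $\lambda$ in the spectrum of $\mat W$. All these eigenvalues are real by Proposition~\ref{prop:spectrum_alphabeta}. On the $\mat W$-invariant complement of $\vect 1$ (the kernel of $\vect\pi^\top$), the error $\vect x(t)-\vect{\bar x}$ therefore decays at rate $1-\lambda_2$, where $\lambda_2$ is the largest eigenvalue of $\mat W$ other than $\lambda_A=1$. This is how the ``second largest eigenvalue'' of Proposition~\ref{prop:convergence} should be read for continuous time.

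Obtaining a pure exponential bound without polynomial factors requires $\mat W$ to be diagonalizable, or at least $\lambda_2$ to be semisimple. I would check this from the eigenvector construction in the proof of Proposition~\ref{prop:spectrum_alphabeta}. Each block of multiplicity $M-1$ or $M(M-1)$ is generated by linearly independent vectors: zero-sum patterns across the bottom units, and zero-sum patterns among the members within a unit. Hence geometric and algebraic multiplicities coincide. If that failed, I would instead absorb a polynomial factor into $K$ by taking an arbitrarily small loss in $r$.

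The remaining step is to show $\lambda_2=\max\{\lambda_B,\lambda_G\}$, i.e., that the other candidates are dominated.
\begin{itemize}
\item $\lambda_H$: it is negative, since the square root in \eqref{eq:gh} exceeds $M-1$, while $\lambda_G>0$.
\item $\lambda_C,\lambda_D$: the map $K\mapsto\lambda(K)$ in \eqref{eq:lambdaB_general} is a M\"obius map. I would locate the roots of \eqref{eq:K_general} relative to the pole $K=1/\beta$ by evaluating the cubic at $K=0$, at $K=1/\beta$, and at $\pm\infty$ (sign changes). Then I would show that on the branch containing $K_B$ the map is increasing in $K$, and that the roots on the other branch give smaller values, so $\lambda_B\ge\lambda_C,\lambda_D$.
\item $\lambda_E$: if $\beta\le1$ then $\lambda_E\le0<\lambda_G$. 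If $\beta>1$, I would show $\lambda_G\ge\lambda_E$ by checking $M-1+\sqrt{\cdot}\ge 2(\beta-1)$, i.e., bounding $\frac{(M+\beta)(M+\alpha\beta)}{M+\alpha}\ge (M+\beta)\min\{1,\beta\}$. If that bound is not enough, I would compare $\lambda_E$ with $\lambda_B$ instead, by evaluating the cubic at the $K$ value that maps to $\lambda_E$.
\item $\lambda_F$: I would compare with $\lambda_G$ directly, by substituting $\lambda_F$ into the quadratic $(M+\beta)\lambda^2-(M-1)\lambda-\frac{M+\alpha\beta}{M+\alpha}=0$ whose roots are $\lambda_{G,H}$. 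Showing that the value there is $\le0$ places $\lambda_F\in[\lambda_H,\lambda_G]$.
\end{itemize}

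I expect the main obstacle to be this last step, specifically the root location for the cubic \eqref{eq:K_general} and the comparisons of $\lambda_E,\lambda_F$ against $\max\{\lambda_B,\lambda_G\}$. These must hold uniformly in $\alpha,\beta>0$ and $M\ge2$, so the sign computations have to be organized carefully, and a case split on $\beta\lessgtr1$ and $\alpha\lessgtr\beta$ may be needed. Once this is done, \eqref{eq:r} follows with $r=1-\max\{\lambda_B,\lambda_G\}$, and $K$ depends only on $\vect x(0)$ and the eigenvector matrix.
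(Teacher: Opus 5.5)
Your strategy is the paper's: specialize Proposition~\ref{prop:convergence} using $\vect\pi$ and the explicit spectrum, and locate the roots of the cubic via $f(0)=\alpha>0$ and $f(1/\beta)<0$, so that $K_B>1/\beta>K_C>0>K_D$. Then rank the eigenvalues. One of your planned comparisons cannot work, though: $\lambda_G\ge\lambda_E$ is false in general, not merely hard to bound. For $M=2$, $\alpha=1$, $\beta=10$ you get $\lambda_G=(1+\sqrt{193})/24\approx 0.62<\lambda_E=3/4$. Your bound through $\min\{1,\beta\}$ only closes for $\beta\le M+2$.

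So your fallback is mandatory, and it needs no evaluation of the cubic. Write $\lambda(K)=\frac{M+\beta-1}{M+\beta}+\frac{1}{\beta K-1}$. This map is \emph{decreasing} on both branches; your claim that it increases on the branch of $K_B$ is wrong, though you never need it. Every value taken on $K>1/\beta$ exceeds the asymptote, hence $\lambda_B>\frac{M+\beta-1}{M+\beta}>\frac{\beta-1}{M+\beta}=\lambda_E$. This one bound is what the paper's proof rests on; its text swaps the two branch inequalities and writes $K_B$ where it means $\lambda_B$. It disposes of $\lambda_C$, $\lambda_D$, $\lambda_E$ and $\lambda_F$ simultaneously, the last because $\lambda_F<\frac{M+\beta-1}{M+\beta}$ reduces to $M(M+\beta)>0$. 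No case split on $\beta\lessgtr1$ or $\alpha\lessgtr\beta$ is needed.

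Your alternative route for $\lambda_F$ does go through. With your quadratic $q$, a direct expansion gives $(M+\alpha)^2(M+\beta)\,q(\lambda_F)=-(\alpha M^3+2\alpha\beta M^2+\alpha\beta^2M+2\alpha\beta M+\beta M^2+\alpha^2\beta)<0$, so $\lambda_H<\lambda_F<\lambda_G$. That is a bit more information than the paper extracts, at the price of a computation that the comparison with $\lambda_B$ avoids.

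The rest of your plan is correct:
\begin{itemize}
\item the sign argument for $\lambda_H$;
\item reading the ``second largest eigenvalue'' as the largest eigenvalue of $\mat W$ below $1$, rather than by modulus as Proposition~\ref{prop:convergence} is literally phrased;
\item the semisimplicity check, which the paper leaves implicit.
\end{itemize}
These points make your argument more complete than the paper's.
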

 \begin{pf}The results comes from specializing Proposition~\ref{prop:convergence} using the results of Propositions~\ref{prop:pi} and \ref{prop:spectrum_alphabeta}, after proving that $\lambda_B$ or $\lambda_G$ is the second largest eigenvalue of $\mat W$. 
 
 First, we prove that $\lambda_{B}>\lambda_{D}>\lambda_{C}$. To this aim, we study \eqref{eq:lambdaB_general} as function of $K$, i.e., $\lambda(K)$, from which we derive $\lambda_B=\lambda(K_B)$, $\lambda_C=\lambda(K_C)$, and $\lambda_D=\lambda(K_D)$. The function is a rectangular hyperbola with vertical asymptote at $K^*=\frac{1}{\beta}$. Hence, $\lambda(K)$ is locally monotonically decreasing in each side of the hyperbola, i.e., in the two intervals $K\in(-\infty,\frac{1}{\beta})$ and $K\in(\frac{1}{\beta},\infty)$. Moreover, the horizontal asymptote of the hyperbola is $\frac{M+\beta-1}{M+\beta}$. Hence, $\lambda(K)\geq \frac{M+\beta-1}{M+\beta}$ for any  $K<\frac1\beta$ and $\lambda(K)\leq\frac{M+\beta-1}{M+\beta}$ for any  $K>\frac1\beta$. Finally, we observe that  $K^*=\frac{1}{\beta}$
is between $K_B$ and $K_C$, this implies that $K_B>\frac{M+\beta-1}{M+\beta}$, while $K_C<\frac{M+\beta-1}{M+\beta}$ and $K_D<\frac{M+\beta-1}{M+\beta}$. Finally, monotonicity of $\lambda(K)$ yields the ordering  $\lambda_{B}>\lambda_{D}>\lambda_{C}$.  

From the bound $K_B>\frac{M+\beta-1}{M+\beta}$, we can also immediately verify that $\lambda_B>\lambda_E$ (being $M+\beta-1>\beta-1$). Then, we also prove  that $\lambda_B>\lambda_F$. In fact, since $\lambda_B>\frac{M+\beta-1}{M+\beta}$, we have just to verify that $M+\beta-1>M(\alpha-1)+\alpha(\beta-1)$, which ultimately simplifies to the condition $M(M+\beta)>0$, which is always satisfied. Finally, observing that $\lambda_G>\lambda_H$, we conclude that the second largest eigenvalue is either $\lambda_B$ or $\lambda_G$, yielding \eqref{eq:r}. \qed
 \end{pf}

     The question of which of $\lambda_B$ and $\lambda_G$ is the larger to determine \eqref{eq:r} ultimately depends on the model parameters. Typically, it  occurs that $\lambda_B>\lambda_G$ (see, e.g., Corollary~\ref{cor:equal}), except for scenarios in which $M$ is small, $\alpha$  large and $\lambda_B$ small, as suggested by the plots in Fig.~\ref{fig:lambda}, which show that the region in which $\lambda_G>\lambda_B$ shrinks as $M$ grows.

     \begin{figure}
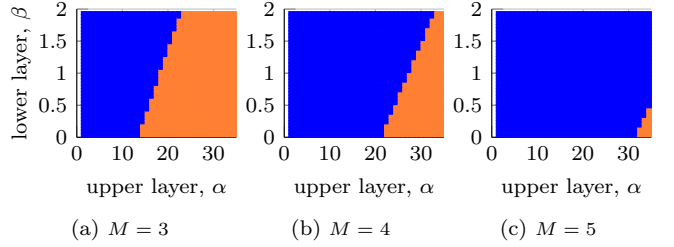

         \centering
         \subfloat[\scriptsize$M=3$]{\input{fig/3a}\label{fig:lambda_1}}\subfloat[\scriptsize$M=4$]{\input{fig/3b}\label{fig:lambda_2}}\subfloat[\scriptsize$M=5$]{\input{fig/3c}\label{fig:lambda_3}}
         \caption{Comparison between $\lambda_B$ and $\lambda_G$ computed using \eqref{eq:lambdaB_general} and \eqref{eq:gh}, respectively, for different breadths $M$. In the blue region $\lambda_B>\lambda_G$, in the orange region $\lambda_G>\lambda_B$.}
         \label{fig:lambda}
     \end{figure}

 Interestingly, we can make the following observation on the dependency of the convergence rate on the model parameters, which provides analytical support to the observations from Example~\ref{ex:example}.
 
\begin{cor}\label{cor:rate}
     The rate $r$ from \eqref{eq:r} is monotonically increasing with respect to $\alpha$.
 \end{cor}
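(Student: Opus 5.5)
Since $r=1-\max\{\lambda_B,\lambda_G\}$ by Theorem~\ref{th:problem1}, it suffices to show that both $\lambda_B$ and $\lambda_G$ are nonincreasing in $\alpha$; then their maximum is nonincreasing as well and $r$ is nondecreasing. I would treat the two eigenvalues separately, with $M$ and $\beta$ fixed.

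\emph{The eigenvalue $\lambda_G$.} This one is explicit in \eqref{eq:gh}, and $\alpha$ enters only through
\[
\frac{M+\alpha\beta}{M+\alpha}.
\]
Its derivative with respect to $\alpha$ is
\[
\frac{\beta(M+\alpha)-(M+\alpha\beta)}{(M+\alpha)^2}=\frac{M(\beta-1)}{(M+\alpha)^2}.
\]
The sign of this derivative depends on whether $\beta<1$ or $\beta>1$. So $\lambda_G$ is not decreasing in $\alpha$ for every $\beta$, and I expect this is the main obstacle. I would handle it in one of two ways:
\begin{itemize}
\item show that in the regime $\beta>1$, where $\lambda_G$ increases with $\alpha$, the maximum is always attained by $\lambda_B$; or
\item read the corollary in the regime where $r$ is determined by $\lambda_B$, which the text says is typical.
\end{itemize}
For the first option, I would compare $\lambda_G$ with the horizontal asymptote $\frac{M+\beta-1}{M+\beta}$, which the proof of Theorem~\ref{th:problem1} showed is a lower bound for $\lambda_B$. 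I would try to prove that $\lambda_G$ lies below this bound whenever $\beta\ge 1$. The expression inside the square root is bounded by its limit as $\alpha\to\infty$, namely $(M-1)^2+4(M+\beta)\beta$. A direct comparison of the two quadratics in $\beta$ then remains to be checked.

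\emph{The eigenvalue $\lambda_B$.} This is the harder piece. I would use implicit differentiation on the cubic \eqref{eq:K_general}. Write it as $F(K,\alpha)=0$ with
\[
F(K,\alpha)=(\alpha+MK)(\beta K-1)^2-(M+\alpha)K\bigl(\beta K(M+\beta-1)+1\bigr).
\]
Then
\[
\frac{dK_B}{d\alpha}=-\frac{\partial_\alpha F}{\partial_K F},
\]
with
\[
\partial_\alpha F=(\beta K-1)^2-K\bigl(\beta K(M+\beta-1)+1\bigr).
\]
Since $K_B>1/\beta$ lies on the right branch of the hyperbola, where $\lambda(K)$ is decreasing, $\lambda_B$ decreases in $\alpha$ exactly when $K_B$ increases in $\alpha$.

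To fix the signs, I would use that $K_B$ is the largest root and the leading coefficient of $F$ in $K$ is $M\beta^2-(M+\alpha)\beta(M+\beta-1)$. Its sign tells whether $F$ crosses zero upward or downward at $K_B$, and this fixes the sign of $\partial_K F(K_B)$.

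It is cleaner to substitute $F=0$ to rewrite $\partial_\alpha F$ at the root. From $F=0$,
\[
K\bigl(\beta K(M+\beta-1)+1\bigr)=\frac{(\alpha+MK)(\beta K-1)^2}{M+\alpha}.
\]
Substituting gives
\[
\partial_\alpha F=(\beta K-1)^2\,\frac{M(1-K)}{M+\alpha}.
\]
Hence the sign of $\partial_\alpha F$ is the sign of $1-K_B$. The remaining work is to locate $K_B$ relative to $1$ by evaluating $F$ at $K=1$ and $K=1/\beta$ and using the intermediate value theorem. Combining the two signs should give $dK_B/d\alpha>0$. If the cases $\beta\lessgtr1$ complicate this step, I would split on them, mirroring the $\lambda_G$ analysis.
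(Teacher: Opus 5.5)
Your derivative $\partial_\alpha\frac{M+\alpha\beta}{M+\alpha}=\frac{M(\beta-1)}{(M+\alpha)^2}$ is correct. It exposes an error in the paper's own proof, which asserts that the $\alpha$-dependent term under the square root in \eqref{eq:gh} is decreasing in $\alpha$; that holds only for $\beta<1$. However, neither of your remedies proves the statement, and the first cannot succeed, because the corollary is false for $\beta>1$.

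Your own bound already signals this. The $\alpha\to\infty$ value of the radicand is $(M-1)^2+4\beta(M+\beta)=(M+2\beta-1)^2+4\beta$, which exceeds $(M+2\beta-1)^2$, so $\lambda_G$ eventually lies above $\frac{M+\beta-1}{M+\beta}$. Now write \eqref{eq:lambdaB_general} as $\lambda(K)=\frac{M+\beta-1}{M+\beta}+\frac{1}{\beta K-1}$. As $\alpha\to\infty$, $F/\alpha\to1-(2\beta+1)K-\beta(M-1)K^2$, which is negative on $[1/\beta,\infty)$, so $K_B\to\infty$; in fact $K_B\sim\alpha(M-1)/(M\beta)$. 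Hence $\lambda_B\to\frac{M+\beta-1}{M+\beta}$, whereas
\[
\lambda_G\to\frac{M-1+\sqrt{(M+2\beta-1)^2+4\beta}}{2(M+\beta)}>\frac{M+\beta-1}{M+\beta}.
\]
So for every $\beta$ the second largest eigenvalue is $\lambda_G$ once $\alpha$ is large. For $\beta>1$ the rate $r=1-\lambda_G$ is then strictly decreasing in $\alpha$. For example, take $M=3$, $\beta=2$:
at $\alpha=30$, $\lambda_G\approx0.8495>\lambda_B\approx0.8373$, so $r\approx0.1505$;
at $\alpha=100$, $\lambda_G\approx0.8589>\lambda_B\approx0.8135$, so $r\approx0.1411$.

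What is provable is weaker. $1-\lambda_B$ is strictly increasing in $\alpha$ for all parameters, so $r$ increases in $\alpha$ wherever $\lambda_B\ge\lambda_G$. Also, $r$ is nondecreasing in $\alpha$ whenever $\beta\le1$, strictly for $\beta<1$. Your second option is such a restriction, not a proof of the claim as stated.

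For the $\lambda_B$ branch your route is essentially the paper's: the largest root of \eqref{eq:K_general} moves right as $\alpha$ grows. Fix one slip, though. The leading coefficient of $F$ in $K$ is $M\beta^2>0$, not $M\beta^2-(M+\alpha)\beta(M+\beta-1)$, because $(M+\alpha)K\cdot\beta K(M+\beta-1)$ is only quadratic in $K$. With your negative coefficient you would get $\partial_KF(K_B)<0$ and conclude that $K_B$ decreases.

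With $M\beta^2>0$ and three simple real roots, $\partial_KF(K_B)>0$. Since $F(1)=-\beta(M+1)(M+\alpha)<0$, you have $K_B>1$. Your identity $\partial_\alpha F=(\beta K-1)^2\frac{M(1-K)}{M+\alpha}$ at the root then gives $\partial_\alpha F(K_B)<0$, hence $dK_B/d\alpha>0$. The paper reaches the same sign more directly by expanding $\partial_\alpha F=1-(2\beta+1)K-\beta(M-1)K^2<-1-\frac{M}{\beta}$ for $K>\frac1\beta$. That needs neither the substitution nor locating $K_B$ relative to $1$.
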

\begin{pf}If the second largest eigenvalue is $\lambda_G$, then the claim can be immediately derived by checking that only the last term in the square root of \eqref{eq:gh} depends on $\alpha$, and it is monotonically decreasing in $\alpha$. Hence, $1-\lambda_{G}$ is monotonically increasing in $\alpha$. If the second largest eigenvalue is $\lambda_B$, we  evaluate its dependence on $\alpha$ by means of the total derivative \begin{equation}\label{eq:total}
    \frac{d\lambda_B}{d\alpha}=\frac{\partial\lambda_B}{\partial\alpha}+\frac{\partial\lambda_B}{\partial K_B}\frac{d K_B}{d\alpha}. 
    \end{equation}From \eqref{eq:lambdaB_general}, we observe that $\frac{\partial \lambda_B}{\partial \alpha}=0$ and we compute $\frac{\partial \lambda_B}{\partial K}=-\frac{\beta}{(\beta K-1)^2}<0$, for any $K\neq \frac1\beta$. Since we know that $K_B$ is the largest solution of \eqref{eq:K_general}, for which there holds $K_B>\frac1\beta$, in order to 
guarantee that $\frac{d\lambda_B}{d\alpha}<0$, according to the observations made above on the first two terms of \eqref{eq:total}, we are only left to demonstrate that  $\frac{d K_B}{d\alpha}>0$, i.e., that $K_B$ is monotonically increasing with respect to $\alpha$. To show that the largest solution of \eqref{eq:K_general} is monotonically increasing with respect to $\alpha$, we denote the left-hand side of \eqref{eq:K_general} as $F(K)$. We observe that
        $\frac{\partial F}{\partial \alpha}=(\beta K-1)^2-K(\beta K(M+\beta)-\beta K+1)=1-2\beta K-\beta (M-1) K^2-K<-1-\frac{M}{\beta}<0$,   for any $K>\frac{1}{\beta}$. Finally, we observe that $F(K)$ is a cubic function that tends to $+\infty$ as $K\to+\infty$. Hence, the fact that $\frac{\partial F}{\partial \alpha}<0$ for any $K>\frac{1}{\beta}$ combined with the fact that the largest solution of $F(K)=0$ is greater than $\frac{1}{\beta}$ implies that such a solution, i.e., $K_B$ is monotonically increasing in $\alpha$, yielding the  claim. 
    \qed
\end{pf}

Corollary~\ref{cor:rate} depicts a clear monotonic dependence of the convergence rate on $\alpha$ for the autonomous consensus problem: to reach fast synchronization it is always beneficial to increase the value given to information coming from the upper layer. Intuitively, this quickly steers the whole hierarchical structure to reach a consensus close to the initial opinion of the leader of the first unit. In fact, from Proposition~\ref{prop:pi}, we observe that the corresponding weight (viz. the $(M+1)$th entry of ${\vect\pi}$) becomes dominant as $\alpha$ grows. 

Figure~\ref{fig:problem1_1} illustrates such a  monotone dependence, supporting the analytical insights from Corollary~\ref{cor:rate}, and providing further insights into the dependence on $\beta$, which appears instead to be monotonically decreasing. In the figure, darker shades of green are associated with larger convergence rates. Interestingly, we observe that such dependence shows great differences in sensitivity across the parameter space, whereby a small increase of $\beta$ seems to be able to jeopardize the beneficial effect of large values of $\alpha$. Finally, it is worth noticing that, while our analysis is limited to the case $L=2$, the numerical computation of the second largest eigenvalue for $L=3$ seems to yield results that are consistent with our analytical findings for $L=2$ (see Fig.~\ref{fig:problem1_2}, where the presence of an additional layer seems to exacerbate  the differences in sensitivity  described above.

 \begin{figure}
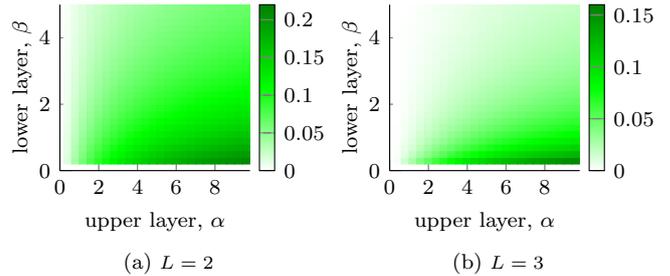

     \centering
\subfloat[\scriptsize$L=2$]{\input{fig/4a}\label{fig:problem1_1}}
\subfloat[\scriptsize$L=3$]{\input{fig/4b}\label{fig:problem1_2}}
     \caption{Convergence rate $r$ for the autonomous consensus problem, computed (a) analytically  
 for $L=2$ and  (b) numerically for $L=3$, for different values of $\alpha$ and $\beta$. In both scenarios $M=3$. }
     \label{fig:problem1}
 \end{figure}

 Finally, it is worth noticing that, when we constrain information coming from different layers to have the same weight (i.e., $\alpha=\beta$), the results in Proposition~\ref{prop:spectrum_alphabeta} simplify, yielding an easy closed-form expression of the convergence rate, which is reported in the following corollary. 
 \begin{cor}\label{cor:equal}
If $\alpha=\beta$, then $\vect{\pi}=\frac{1}{N}\vect1$, and \eqref{eq:r} simplifies to 
\begin{equation}\label{eq:r_same}
 r=
 \frac{M+2\alpha+1-\sqrt{(M+2\alpha+1)^2-4\alpha}}{2(M+\alpha)}.
\end{equation}
 \end{cor}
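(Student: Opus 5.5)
The plan is to specialize Propositions~\ref{prop:pi} and~\ref{prop:spectrum_alphabeta} and Theorem~\ref{th:problem1} to $\alpha=\beta$, and to show that both the cubic \eqref{eq:K_general} and the comparison between $\lambda_B$ and $\lambda_G$ become explicit.

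\emph{The left eigenvector.} This part is direct. Setting $\beta=\alpha$ in \eqref{eq:left} makes every entry equal to $1$, since $\tfrac{M+\alpha}{M+\beta}=\tfrac{\beta}{\alpha}=1$. The normalization factor becomes
\begin{equation*}
K=\frac{\alpha(M+\alpha)}{(M^2+M+\alpha+M\alpha)(M\alpha+\alpha)}=\frac{\alpha(M+\alpha)}{(M+1)(M+\alpha)\,\alpha(M+1)}=\frac{1}{(M+1)^2}=\frac1N .
\end{equation*}
This is consistent with $\mat W$ being symmetric when $\alpha=\beta$ (Remark~\ref{rem:symmetry}), hence doubly stochastic.

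\emph{The cubic in terms of $\lambda$.} For the rate, I would first rewrite \eqref{eq:K_general} as a polynomial in $\lambda$ rather than $K$. By \eqref{eq:lambdaB_general}, $\lambda(K)$ is a Möbius map, so I would invert it explicitly. With $\beta=\alpha$, the relation $(M+\alpha)(\alpha K-1)\lambda=\alpha K(M+\alpha-1)+1$ gives
\begin{equation*}
K=\frac{(M+\alpha)\lambda+1}{\alpha\bigl((M+\alpha)\lambda-M-\alpha+1\bigr)} .
\end{equation*}
Substituting this into \eqref{eq:K_general} with $\beta=\alpha$ and clearing denominators yields a cubic in $\lambda$ whose roots are $\lambda_B,\lambda_C,\lambda_D$.

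\emph{Factoring the cubic.} I expect the symmetric case to make this cubic factor as a linear term times a quadratic. I would find the linear factor by testing natural candidates: $\lambda=\tfrac{\alpha-1}{M+\alpha}$, i.e.\ $\lambda_E$ at $\beta=\alpha$, or the value $\lambda_F$. Writing $\lambda=1-r$, the target formula \eqref{eq:r_same} is the smaller root of
\begin{equation*}
(M+\alpha)r^2-(M+2\alpha+1)r+\frac{\alpha}{M+\alpha}=0,
\end{equation*}
because its discriminant is exactly $(M+2\alpha+1)^2-4\alpha$. So the quadratic factor should coincide, up to a constant, with this polynomial written in $\lambda$. I would confirm this by matching coefficients. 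Since $\lambda(K)$ is decreasing and $K_B$ is the largest root, $\lambda_B$ is the largest root in $\lambda$, i.e.\ the smallest $r$; this is the minus sign in \eqref{eq:r_same}. As a check that the linear factor does not give $\lambda_B$, the proof of Theorem~\ref{th:problem1} places $\lambda_B$ above $\tfrac{M+\alpha-1}{M+\alpha}$, which the candidate linear roots violate.

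\emph{Comparison with $\lambda_G$.} It remains to show that $\lambda_B\ge\lambda_G$ when $\alpha=\beta$, so that the maximum in \eqref{eq:r} is $\lambda_B$. With $\beta=\alpha$, \eqref{eq:gh} gives
\begin{equation*}
\lambda_G=\frac{M-1+\sqrt{(M-1)^2+4(M+\alpha^2)}}{2(M+\alpha)} ,
\end{equation*}
and
\begin{equation*}
\lambda_B=1-r=\frac{M-1+\sqrt{(M+2\alpha+1)^2-4\alpha}}{2(M+\alpha)} .
\end{equation*}
The two expressions share the denominator and the term $M-1$, so it suffices to compare the radicands:
\begin{equation*}
(M+2\alpha+1)^2-4\alpha-(M-1)^2-4M-4\alpha^2 = 4\alpha M\ge 0 .
\end{equation*}
This gives $\lambda_B\ge\lambda_G$ (strictly, since $\alpha,M>0$), and Theorem~\ref{th:problem1} then yields \eqref{eq:r_same}.

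\emph{Main obstacle.} I expect the hard step to be the algebraic factorization of the substituted cubic. If guessing the linear factor fails, I would instead compute the spectrum of the symmetric $\mat W$ directly on the $(M-1)$-multiplicity eigenspaces. That means using eigenvectors that are antisymmetric across the top-unit members $i$ and that propagate into sub-unit $i$ with a common profile. This reduces the problem to a small symmetric matrix whose characteristic polynomial should visibly contain the quadratic above.
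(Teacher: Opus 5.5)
Your plan is correct and is essentially the paper's proof: the paper likewise uses the explicit root of the cubic at $\alpha=\beta$ ($K_D=-1/M$, which is exactly your candidate $\lambda=\frac{\alpha-1}{M+\alpha}$), solves the remaining quadratic for $\lambda_B$ (your quadratic in $r$ is the right one), and compares with $\lambda_G$. There, your exact radicand difference $4\alpha M$ is cleaner and more reliable than the inequality the paper displays, which as written fails for $\alpha>M+2$. One inference needs repair: $\lambda_B$ is not the largest root because $\lambda(K)$ is decreasing (a decreasing map would send the largest $K$ to the smallest $\lambda$); it is the largest because $K_B$ is the only root to the right of the pole $K=1/\alpha$, so $\lambda_B$ alone exceeds $\frac{M+\alpha-1}{M+\alpha}$, whereas $\frac{\alpha-1}{M+\alpha}$ and the other quadratic root $\frac{M-1-\sqrt{(M+2\alpha+1)^2-4\alpha}}{2(M+\alpha)}<0$ lie below it.
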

 \begin{pf}
     When $\alpha=\beta$, $\vect\pi$ simplifies to a uniform vector.  Moreover, \eqref{eq:K_general} has a solution for $K_D=-\frac{1}{M}$, yielding $\lambda_D=\frac{\alpha-1}{M+\alpha}$. This simplifies the computation of the other two solutions of the equation, yielding \begin{equation}\label{eq:b_same}
 \lambda_B=\frac{M-1+\sqrt{(M+2\alpha+1)^2-4\alpha}}{2(M+\alpha)}.
\end{equation}   
     Finally, it is possible to directly compare $\lambda_B$ in \eqref{eq:b_same} and $\lambda_G$ in \eqref{eq:gh} with $\alpha=\beta$. In fact, we observe that $(M+2\alpha+1)^2-4\alpha\geq (M+2\beta-1)^2+4\alpha+\frac{4M(M-\alpha)(1-\alpha)}{M+\alpha}$, implying that $\lambda_B>\lambda_G$, which yields the claim.\qed
 \end{pf}

Interestingly, when $\alpha=\beta$, i.e., when the members of a unit give the same value to the information received from the upper and lower layers of the hierarchy, the dependence on $\alpha$ becomes nontrivial. In fact, from \eqref{eq:r_same}, we observe that in the two limits $\lim_{\alpha\to 0}r=\lim_{\alpha\to \infty}r=0$, i.e., convergence becomes slow. From its analytical expression in \eqref{eq:r_same}, we can compute the optimal value of $\alpha=\beta$ to speed up the emergence of a consensus, as the argument that maximizes \eqref{eq:r_same}, obtaining \begin{equation}
  \alpha^*=\beta^*=\frac{M+(M-1)\sqrt{M(2M+1)}}{2M-1}.
\end{equation}
Interestingly, the optimal value of $\alpha=\beta$ grows with $M$, ultimately approaching $\alpha^*= \frac{\sqrt 2}{2}M+o(M)$. To better understand such a linear dependence on $M$, we observe that, in order to reach fast convergence to a consensus, the weight of the information coming from the other layers should compensate for the fact that the size of each unit grows linearly with $M$, and so does the total weight of the interactions within the unit.



\subsection{Problem II: Bottom-up consensus}

Here, we study the bottom-up consensus problem, in which a non-zero input is introduced in a member of one of the bottom-layer units, and we are interested in understanding how fast the entire network converges to such an input. Example~\ref{ex:example} suggests that the problem is non-trivial, as the rate of convergence to the input seems to depend on the model parameters in a non-monotone manner (especially on $\alpha$). Here, we offer some analytical insights supporting this claim. First, from Corollary~\ref{cor:rate}, we observe that in the absence of any input, the rate of convergence is monotonically increasing with respect to $\alpha$. Second, we consider Proposition~\ref{prop:convergence2}, which relates the rate of convergence to the input with  the dominant eigenvalue of $\mat{\tilde W}=\mat W - \gamma e_{M+2}e_{M+2}^\top$, and we observe the following.

\begin{prop}\label{prop:input}
The largest eigenvalue of $\tilde {\bf{W}}$ is equal to
    \begin{equation}\label{eq:linearA}
    \lambda_A(\tilde{\bf{W}})=-\gamma\frac{\beta(M+\beta)}{(\beta M+\alpha)(M^2+M\beta+M+\alpha)} +o(\gamma).
\end{equation} 
\end{prop}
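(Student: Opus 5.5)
The plan is to treat $\tilde{\mat W}=\mat W-\gamma e_{M+2}e_{M+2}^\top$ as a rank-one perturbation of $\mat W$ and to apply first-order perturbation theory for a simple eigenvalue \citep{Greenbaum2020}. By Proposition~\ref{prop:spectrum_alphabeta}, $\lambda_A=1$ is a simple eigenvalue of $\mat W$, and it is strictly separated from the rest of the (real) spectrum. Its right eigenvector is $\vect 1$, since $\mat W\vect 1=\vect 1$ by stochasticity, and its left eigenvector is $\vect\pi$ from Proposition~\ref{prop:pi}, normalized so that $\vect\pi^\top\vect 1=1$.

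Because the eigenvalue is simple, it depends analytically on $\gamma$ in a neighborhood of $\gamma=0$. For small $\gamma$ it therefore remains the largest eigenvalue of $\tilde{\mat W}$, by continuity of the spectrum and the positive gap to $\max\{\lambda_B,\lambda_G\}$. The standard formula then gives
\begin{equation*}
\lambda_A(\tilde{\mat W})=1+\gamma\,\frac{\vect\pi^\top(-e_{M+2}e_{M+2}^\top)\vect 1}{\vect\pi^\top\vect 1}+o(\gamma)=1-\gamma\,\pi_{M+2}+o(\gamma).
\end{equation*}
Hence the first-order shift of the dominant eigenvalue is exactly $-\gamma$ times the $(M+2)$th entry of $\vect\pi$. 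By the indexing convention, this is the weight of the first member of the first second-layer unit, i.e.\ the node receiving the input.

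It then remains to read off that entry from \eqref{eq:left}: $\pi_{M+2}=K\frac{\beta}{\alpha}$. With $K=\frac{\alpha(M+\beta)}{(M^2+M+\alpha+M\beta)(M\beta+\alpha)}$ the factor $\alpha$ cancels, giving
\begin{equation*}
\pi_{M+2}=\frac{\beta(M+\beta)}{(\beta M+\alpha)(M^2+M\beta+M+\alpha)}.
\end{equation*}
This is precisely the coefficient in \eqref{eq:linearA}. I read \eqref{eq:linearA} as describing the displacement of the dominant eigenvalue from $1$, i.e.\ $\lambda_A(\tilde{\mat W})-1$. Equivalently, the rate in Proposition~\ref{prop:convergence2} satisfies $r=\gamma\pi_{M+2}+o(\gamma)$.

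The only delicate step is justifying the perturbation formula, since $\mat W$ is non-symmetric (Remark~\ref{rem:symmetry}). The formula nevertheless holds for any simple eigenvalue with $\vect\pi^\top\vect 1\neq0$, which is true here because $\vect\pi$ is entrywise positive. I would also verify the normalization $\vect\pi^\top\vect 1=1$ of $K$, to be sure no extra factor appears. This amounts to summing $M+\frac{M+\alpha}{M+\beta}+M\big(\frac{\beta}{\alpha}M+\frac{(M+\alpha)\beta}{(M+\beta)\alpha}\big)$ and checking that it equals $1/K$.
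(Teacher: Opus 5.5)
Your proposal is correct and follows essentially the same route as the paper: a first-order perturbation of the simple eigenvalue $\lambda_A=1$, with right eigenvector $\vect{1}$ and left eigenvector $\vect\pi$ from Proposition~\ref{prop:pi}, which gives a shift of $-\gamma\pi_{M+2}=-\gamma K\frac{\beta}{\alpha}$. The paper does the same computation with the unnormalized $\vect\pi$ divided by $\vect\pi^\top\vect 1$. Your reading of \eqref{eq:linearA} as $\lambda_A(\tilde{\mat W})-1$ matches the paper's own proof, which concludes with $1-\gamma\frac{\beta(M+\beta)}{(\beta M+\alpha)(M^2+M\beta+M+\alpha)}+o(\gamma)$, and the normalization check $\vect\pi^\top\vect 1=1$ for $K$ does go through.
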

\begin{pf}
By applying an eigenvalue perturbation argument~\cite{Greenbaum2020}, we observe that the largest eigenvalue of $\mat{\tilde W}=\mat W-\gamma e_{M+2}e_{M+2}^\top$ can be written as
\begin{equation}\label{eq:perturbation}\begin{array}{ll}
    \lambda_A(\tilde{\bf{W}})&=\lambda_A({\bf{W}})-\gamma\frac{1}{\vect{\pi}^\top\vect{v}} \vect{\pi}^\top e_{M+2}e_{M+2}^\top \vect{v}+o(\gamma)\\&=1-\gamma\frac{1}{\vect{\pi}^\top\vect{v}} {\pi}_{M+2}{v}_{M+2}+o(\gamma),
\end{array}
\end{equation} 
where $\vect{v}$ and $\vect{\pi}$ are the right and left eigenvectors of ${\bf{W}}$ associated with $\lambda_A=1$, which are known from the proof of Proposition~\ref{prop:spectrum_alphabeta} ($\vect{v}=\vect{1}$) and from Proposition~\ref{prop:pi}. Hence, substituting the corresponding entries and recalling that $\vect\pi$ is normalized (so $\vect\pi^\top\vect{1}=1$), \eqref{eq:perturbation} reads
\begin{equation}\label{eq:perturbation2}\begin{array}{ll}
    \lambda_A(\tilde{\bf{W}})&=1-\gamma\frac{1}{\frac{\beta M+\alpha}{\alpha}\frac{M^2+M\beta+M+\alpha}{M+\beta}} \frac{\beta}{\alpha}+o(\gamma)\\&=1-\gamma\frac{\beta(M+\beta)}{(\beta M+\alpha)(M^2+M\beta+M+\alpha)} +o(\gamma),
\end{array}\end{equation} 
yielding the claim.\qed
\end{pf}

Interestingly, by computing the relevant derivatives of \eqref{eq:linearA}, we can observe that increasing $\alpha$ leads to a larger first-order approximation of $\lambda_A$, suggesting that it slows down the convergence to the input. On the contrary, increasing $\beta$ decreases the approximation of $\lambda_A$, suggesting that this accelerates the consensus process. This is in contrast with what is observed for the scenario without input, where increase in $\alpha$ speeds up convergence to a consensus. These two contrasting insights provide analytical support to the numerical observations in  Figs.~\ref{fig:example4}--\ref{fig:example6}, which suggest the existence of a non-trivial tradeoff between the two mechanisms that leads to a non-monotonicity with respect to the two parameters.

In Fig.~\ref{fig:problem2}, we numerically evaluate this by leveraging Proposition~\ref{prop:convergence2}, and computing numerically the convergence rate in terms of the largest eigenvalue of $\mat{\tilde W}$. The results of our numerical computations suggest that the optimal speed of convergence is achieved with a tradeoff between the values of $\alpha$ and $\beta$, which should be both not too small neither too large. 

We believe that this tradeoff is due to the presence of the two contrasting mechanisms highlighted in our theoretical derivations. On the one hand, increasing $\alpha$ (and/or decreasing $\beta$) leads to faster convergence of the entire network to a common state (Corollary~\ref{cor:rate}). However, this slows down the speed at which information in an external input diffuses on the network (Proposition~\ref{prop:input}). This can be clearly seen in  Fig.~\ref{fig:example6}, in which most of the node trajectories quickly converge close to a consensus manifold, but then convergence to the input is extremely slow. On the other hand, decreasing $\alpha$ (and/or increasing $\beta$) leads to the opposite scenario, whereby the nodes in the unit with the input quickly reach consensus on the input, but the other trajectories lag (see Fig.~\ref{fig:example4}). This tradeoff further highlights how designing hierarchical organizations is a non-trivial problem, which calls for the use of system- and control-theoretic tools for their optimal design.

\begin{figure}
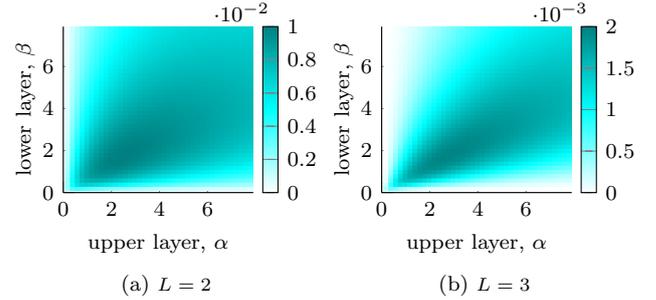

    \centering
  \centering
\subfloat[\scriptsize$L=2$]{\input{fig/5a}\label{fig:problem2_1}}
\subfloat[\scriptsize$L=3$]{\input{fig/5b}\label{fig:problem2_2}}
     \caption{Convergence rate $r$ of the bottom-up consensus problem, computed numerically (a) $L=2$ and b) $L=3$, for different values of $\alpha$ and $\beta$. In both scenarios $M=3$, $u=1$, and $\gamma_5=1$. }
    \label{fig:problem2}
\end{figure}

\section{Conclusion}\label{sec:conclusion}

In this paper, we studied the convergence rate of diffusive coupled single integrators on a hierarchical network structure. Our analysis elucidated how the structural characteristics of the hierarchy and the information processing dynamics impact the convergence rate. We found a crucial tradeoff between fast synchronization of the network and the ability to pass information from the lower layers upwards, which are both key features for the well-functioning of hierarchical organizations. 

The preliminary findings in this paper pave the way several  future research directions. First, our analytical results are limited to the 2-layer scenario. We aim to expand our analysis to networks with more layers,  to obtain a closed-form expression of the convergence rate for the bottom-up consensus problem, which can be used to optimally design a hierarchical organization. Second, this study focused networks without internal node dynamics. Our future research will consider nodes with non-trivial dynamics, which better captures a hierarchical organization in which nodes not only communicate but also perform tasks, e.g., solving a distributed optimization problem.


\bibliography{ifacconf}

\appendix
\section{Proof of Proposition~\ref{prop:spectrum_alphabeta}}\label{app:proof1} 

In this proof, we illustrate the derivation of all eigenvalues of $\mat W$ and the corresponding eigenspaces. First, it is known from the literature~\citep{Bullo2024}, but it is also 
straightforward to check that $\lambda_A=1$ is the dominant eigenvalue, has multiplicity equal to $1$, and its corresponding eigenvector is $\vect{v}=\vect{1}_{(M+1)^2}$.

Then, we consider the family of vectors defined as
\begin{equation}\label{eq:vbcd}
    \vect{\hat v}= \big[\vect{\hat v_1}^\top\,\, \vect{\hat v_2}^\top\,\,\dots\,\,\vect{\hat v_{M+1}}^\top\big]^\top,
\end{equation}
with
\begin{equation}
    \vect{\hat v_1}^\top=[\vect{y}^\top\,\,\,
     - \vect{y}^\top\vect{1}\,\,\,
     0],
\end{equation}
    with arbitrary $\vect{y}=[y_1,\dots,y_{M-1}]^\top\in\mathbb R^{M-1}$ with $\vect{y}\neq\vect{0}$, 
    \begin{equation}
    \vect{\hat v_\ell}^\top=[Ky_{\ell-1}\vect{1}_{M}^\top\,\,\,     Hy_{\ell-1}],
\end{equation}
   for any $\ell\in\{2,\dots,M\}$, and
    \begin{equation}
    \vect{\hat v_{M+1}}^\top=[ -K\vect{y}^\top\vect{1}\vect{1}_M^\top\,\,\,
     -H\vect{y}^\top\vect{1}]^,
\end{equation}
    where $H$ and $K$ are some constant parameters. Observe that for fixed the values of $H$ and $K$, $\vect{\hat v}$ constitutes an  $(M-1)$-dimensional subspace of $\mathbb R^n$, where the degrees of freedom are given by the number of independent entries of the vector $\vect{y}$ (i.e., the first $M-1$ entries of $\vect{\hat v}$), while all its other entries  are completely determined by $\vect{y}$ and the two constants $H$ and $K$. 
    
    Now, we illustrate that, for certain related values of $H$ and $K$, vector $\vect{\hat v}$ is an eigenvector of $\mat W$. In fact, by imposing the eigenvalue equation $\mat{W}\vect{\hat v}=\lambda\vect{\hat v}$, we obtain the conditions in \eqref{eq:lambdaB_general}, $H=\frac{K(M+\beta)}{K\beta-1}$, and also \eqref{eq:K_general}. Finally, letting $f(K)$ denote the left-hand side of \eqref{eq:K_general}, we observe that $f(K)=0$ has always three real and distinct solutions: $K_B>\frac{1}{\beta}$, $K_C\in(0,\frac1\beta)$, and $K_D<0$. This is because (as easily seen) $\lim_{K\to-\infty}f(K)=-\infty$, $f(0)=\alpha>0$, $f(\frac{1}{\beta})=-\frac{(\alpha+M)(\beta+M)}{\beta}<0$, and  $\lim_{K\to+\infty}f(K)=+\infty$, yielding the three real solutions. Then, to each of these distinct values of $K_{B,C,D}$, we associate $H_{B,C,D}=\frac{K_{B,C,D}(M+\beta)}{K_{B,C,D}\beta-1}$, and $\lambda_{B,C,D}$ following \eqref{eq:lambdaB_general}. Hence, each $\lambda_{B,C,D}$ with the subspace determined by the corresponding vector $\vect{\hat v}$ is an eigenvalue with multiplicity $M-1$ and corresponding $(M-1)$-dimensional eigenspace.

Then, we can verify that the $M(M-1)$  vectors 
\begin{equation}
    \vect{\check v}=e_{(M+1)i+1}-e_{(M+1)i+j}, 
\end{equation}
for $i\in\{1,\dots,M\}$ and $j\in\{2,\dots,M\}$ 
are all eigenvectors associated with eigenvalue $\lambda_E=\frac{\beta-1}{M+\beta}$. All these eigenvectors are linearly independent, forming a $M(M-1)$-dimensional eigenspace. Hence $\lambda_E$ is an eigenvalue of $\mat W$ with multiplicity $M(M-1)$.

Finally, we consider another family of vectors:
\begin{equation}\label{eq:vf}
    \vect{\bar v}=\begin{bmatrix}
    Ky\vect{1}_M^\top&
    K^2y&
    y\vect{1}_M^\top&
    Kv_1&       \dots&
  y\vect{1}_M^\top&
    Ky
    \end{bmatrix}^\top,
\end{equation}
for $y\neq 0$, parameterized by the constant $K$. We observe that, for specific values of such constant, $\vect{\bar v}$ is an eigenvalue of $\mat{W}$. In fact, imposing the eigenvalue equation $\mat W\vect{\bar v}=\lambda\vect{\bar v}$, we get the second-order equation: $(M+\alpha)(M+\beta)\lambda^2+((M+\alpha)(-\beta-M+1)-\alpha(M+\beta))\lambda-M+\alpha\beta+\alpha M-\alpha=0$, with $K=(M+\beta)(\lambda-1)+1$. 

One of its two solutions is $\lambda=1$, which forces $K=1$ and then $\bar {\bf{v}}=1$. this being a solution already identified earlier. Because the product of the two solutions of the equation is the ratio between the zeroth order and second order coefficients in the defining equation and one solution is 1, the remaining eigenvalue $\lambda_F$ is seen to be as as claimed. 

To conclude, the computation of $\lambda_{H,G}$ follows a similar approach to that used for obtaining $\lambda_{B,C,D}$. We consider the family of vectors
\begin{equation}\label{eq:vgh}
    \vect{\tilde v}=\begin{bmatrix}
    -\tfrac{M\beta-K\alpha\beta}{\alpha(K+\beta)}y\vect{1}_M^\top&
    -M\frac{\beta}{\alpha}y&
    y\vect{1}_M^\top&
    Kv_1&
       \dots&
 y\vect{1}_M^\top&
    Ky\\
    \end{bmatrix}^\top,
\end{equation}
for $y\neq 0$, with constant parameter $K$, and we observe that under some conditions on $K$, $\vect{\tilde v}$ is an eigenvector. These conditions, computed by imposing the eigenvalue equation, are $\lambda=\frac{M+\beta-1+K}{M+\beta}$ and the second-order equation $K^2+(2\beta+M-1)-+\beta(M+\beta-1)-\frac{(M+\beta)(M+\alpha\beta)}{M+\alpha}=0$. Finally, letting $K_{G,H}$ be the two solutions of the second-order equation (which are always real, as is guaranteed by the positivity of the discriminant), $\lambda_{G,H}$ are obtained by substituting them into the condition for $\lambda$. Observe that both these eigenvalues are simple, since once $K$ is defined, $\vect{\tilde v}$ determines a one-dimensional subspace of $\mathbb R^n$.

Finally, we observe that the total multiplicity of the eigenvalues computed in the above is equal to $1+3(M-1)+M(M-1)+1+2=M^2+2M+1=(M+1)^2$, which coincides with the dimension of the matrix $\mat W$, concluding the computation of the eigenvalues.
\qed
\end{document}